%% file: main.tex
\documentclass[11pt]{article}

\usepackage[utf8]{inputenc}
\usepackage[T1]{fontenc}
\usepackage{lmodern}
\usepackage[margin=1in]{geometry}
\usepackage{amsmath,amssymb,amsthm}
\usepackage{mathtools}
\usepackage{bm}
\usepackage{graphicx}
\usepackage{booktabs}
\usepackage{microtype}
\usepackage{xcolor}
\usepackage{enumitem}
\usepackage{caption}
\usepackage{subcaption}
\usepackage[numbers,sort&compress]{natbib}
\usepackage[colorlinks=true,linkcolor=blue!55!black,citecolor=teal!60!black,urlcolor=blue!55!black]{hyperref}
\usepackage{cleveref}

\newtheorem{proposition}{Proposition}

\newcommand{\R}{\mathbb{R}}
\newcommand{\E}{\mathbb{E}}
\newcommand{\conv}{\operatorname{conv}}
\newcommand{\sep}{\dagger}
\newcommand{\Down}{\mathrm{Down}}
\newcommand{\Up}{\mathrm{Up}}

\newcommand{\Dsep}{d_{\mathrm{sep}}}
\newcommand{\Mset}{\mathcal{M}}
\newcommand{\Pset}{\mathcal{P}}
\newcommand{\dom}{D(E)}

\newcommand{\indic}[1]{\mathbf{1}\!\left[#1\right]}
\DeclareMathOperator*{\argmax}{arg\,max}

\DeclareMathOperator*{\softmax}{softmax}

\input{results_macros}

\title{\bfseries Rendering Separoid Information:\\[2pt]
Rate--Distortion Reconstruction of Convex Apartness Scenes}

\author{%
  Faruk Alpay\thanks{Corresponding author: \texttt{alpay@lightcap.ai}}\quad Bar\i{}\c{s} Ba\c{s}aran\\[2pt]
  \normalsize Department of Computer Engineering, Bah\c{c}e\c{s}ehir University\\
  \normalsize Istanbul, Turkey\\
  \normalsize \texttt{\{faruk.alpay,\,baris.basaran\}@bahcesehir.edu.tr}
}
\date{}

\begin{document}
\maketitle

\begin{abstract}
A convex scene communicates more than shape: the pattern of which groups of
objects are mutually \emph{apart} and which \emph{cross} is a discrete, logical
payload that a picture must carry to a viewer or a downstream decoder. We make
this payload explicit. The apartness table of a finite family of convex
bodies, a \emph{separoid}, is treated as a source signal, a renderable convex
scene as its encoder, and the rendered image as a noisy visual channel from
which the apartness structure is decoded. Within this view we pose
\emph{apartness-preserving rendering as a rate--distortion problem}: minimise a
geometry description length subject to faithfully transmitting the separation
\emph{certificates} (the maximal separations and minimal Radon partitions that
generate the table by closure) rather than minimising pixel error. The
distortion is closure-aware (a high-consequence certificate bit costs more than
a redundant one), the rate is a differentiable geometric code length, and
apartness is encoded as a distribution over separating directions whose breadth
sets its robustness to viewpoint. We give a differentiable support-function
realisation of the pipeline, a variational lower bound on the apartness mutual
information $I(\Sigma;Y)$, and an information-theoretic account of view
selection. Experiments on planar convex scenes show that (i) scenes are
recovered from the apartness table alone at \reconHamSix{}\% bit accuracy, with
the certificate skeleton alone already determining the full table; (ii) a clean
operational rate--distortion frontier emerges, along which the
consequence-weighted certificate distortion exceeds plain Hamming error, since
low-rate rendering preferentially damages the high-consequence structure;
(iii) a rendered
$48\times48$ image delivers \channelISfortyeight{} of the apartness-graph
entropy, with the information score rising with resolution and collapsing with
noise like a capacity-limited channel; and (iv) information-optimal views
certify most separations from a single direction, and a small rate surcharge
makes them robust across a wide cone of viewpoints.
\end{abstract}

\section{Introduction}
Rendering is usually evaluated by how closely an image reproduces a reference
appearance. Yet many figures exist to communicate a \emph{relational} fact:
that these objects form a cluster while those stand apart, that two groups
overlap, that a configuration cannot be pulled apart by any straight cut. In a
diagram of a mechanism, a molecular cartoon, a scene graph, or a statistical
embedding, the geometry is only a carrier; the message is the pattern of
separations and crossings. When that message is what matters, the right
objective is not pixel fidelity but \emph{how many bits of relational structure
the picture transmits, and at what geometric cost.}

We formalise this for convex scenes. A finite family of convex bodies
$C_1,\dots,C_n\subset\R^d$ induces, for every pair of disjoint index sets
$A,B$, a single bit
\[
  \sigma_{A,B}=1 \iff \conv\!\Big(\textstyle\bigcup_{a\in A}C_a\Big)\cap
  \conv\!\Big(\textstyle\bigcup_{b\in B}C_b\Big)=\varnothing .
\]
The collection of these bits is the \emph{apartness table}; the abstract
structure it satisfies is a \emph{separoid}~\citep{arocha2002separoids,
strausz2004separoids}, and a crossing pair ($\sigma_{A,B}=0$) is exactly a
\emph{Radon partition}~\citep{radon1921}. This table is the relational payload
of the scene. Our thesis is that the table is a \emph{source}, a convex scene
is an \emph{encoder} of it, and a renderer is a \emph{channel} that carries it
to a decoder:
\begin{equation}
  \Sigma \;\xrightarrow{\ f_\theta\ }\; V \;\xrightarrow{\ \mathcal{R}_\omega\ }\;
  Y \;\xrightarrow{\ q_\phi\ }\; \hat\Sigma ,
  \label{eq:pipeline}
\end{equation}
where $V$ are scene vertices, $Y$ a rendered image, and $\hat\Sigma$ the decoded
apartness. \Cref{fig:pipeline} shows the pipeline on one scene.

\begin{figure}[t]
  \centering
  \includegraphics[width=\linewidth]{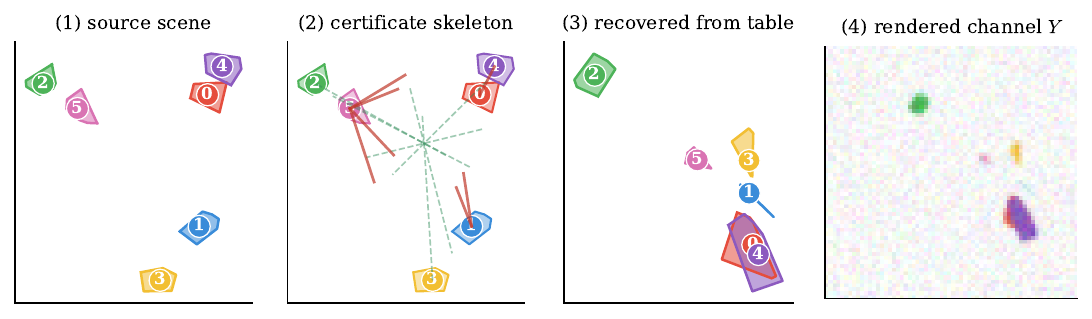}
  \caption{The separation-information rendering pipeline of
  \cref{eq:pipeline}. (1) A source convex scene induces an apartness table.
  (2) Its \emph{certificate skeleton}: minimal Radon partitions (solid red,
  crossings) and maximal separations (dashed green). (3) A scene optimised from
  the apartness \emph{table alone} reproduces the relational structure
  (\teaserAcc\% of apartness bits; apartness-equivalent, not shape-identical).
  (4) The rendered image $Y$ is the visual channel a decoder must read.}
  \label{fig:pipeline}
\end{figure}

\paragraph{From fitting to coding.} Merely placing bodies so that the apartness
holds is constraint satisfaction. The information-theoretic content appears once
we (a) charge a \emph{rate} for the geometry, (b) measure \emph{distortion} in
the currency of logical consequence rather than per-bit error, and (c) ask how
much of the apartness a rendered image actually \emph{conveys}. The three
ingredients (separation certificates, renderable convex carriers, and a
rate--distortion / mutual-information account of the channel) are individually
familiar, but to our knowledge they have not been combined: we are not aware of
a certificate-aware rate--distortion formulation for rendering separoid
apartness structures. Separoid theory rests on convex-set
representation~\citep{arocha2002separoids}; information theory is widely used in
visualization, e.g.\ viewpoint selection~\citep{vazquez2001viewpoint,
feixas2009unified}; and convex representation and decomposition are staple tools
in graphics~\citep{lien2007approximate,mamou2009vhacd}. We connect them.

\paragraph{Contributions.}
\begin{enumerate}[leftmargin=1.4em,itemsep=2pt]
\item \textbf{Apartness as a source, rendering as a channel}
(\Cref{sec:source,sec:channel}). We cast the apartness table as a binary source
whose sufficient statistic is the certificate skeleton (maximal separations and
minimal Radon partitions), and prove that this skeleton determines the entire
table by upward/downward closure (\Cref{prop:skeleton}).
\item \textbf{Certificate-aware rate--distortion} (\Cref{sec:distortion,sec:rate,sec:objective}).
We define a closure-aware distortion $\Dsep$ that weights each certificate by
the number of consequences it controls, a differentiable geometric rate $R(V)$,
and a single objective unifying separation, rate, directional information, and a
graphics-regularity prior. Apartness is realised through a differentiable
support-function channel, so the objective is end-to-end optimisable.
\item \textbf{Apartness mutual information and view selection}
(\Cref{sec:channel,sec:viewpoint}). We give a variational lower bound on
$I(\Sigma;Y)$ that a rendered scene attains, an associated capacity question for
a rendering family, and an information-optimal multi-view selection criterion;
the breadth of the separating-direction distribution is shown to be a
controllable knob for viewpoint robustness.
\item \textbf{Experiments} (\Cref{sec:experiments}). On planar convex scenes we
demonstrate near-perfect recovery from the table alone, a clean operational
rate--distortion frontier on which the consequence-weighted distortion is the
binding criterion, a measured information score for the rendered channel as a
function of resolution and noise, and the predicted view-selection and
robustness effects.
\end{enumerate}

\section{Related work}
\paragraph{Separoids, Radon partitions, realizability.}
Separoids axiomatise the apartness relation of convex
sets~\citep{arocha2002separoids,strausz2004separoids}; the dual notion, pairs
whose hulls intersect, is the classical Radon
partition~\citep{radon1921,tverberg1966}. Realizability of a separoid by a point
or convex-body configuration connects to oriented
matroids and multidimensional sorting~\citep{bjorner1999oriented,
goodman1983multidimensional}. We take this combinatorial structure as given and
ask a new, operational question: how cheaply can a \emph{rendered} convex scene
transmit it.

\paragraph{Differentiable rendering and convex geometry in graphics.}
Differentiable renderers make image formation amenable to gradient-based
optimisation~\citep{loper2014opendr,kato2018neural,liu2019softras,laine2020modular,
kato2020differentiable}. Convex bodies and (approximate) convex decompositions
are standard primitives~\citep{lien2007approximate,mamou2009vhacd}, and the
support function is the canonical differentiable handle on a convex
shape~\citep{schneider2014convex}. Our renderer is a soft support-function
rasteriser; what is new is the \emph{objective}, which optimises a scene to be a
low-rate, faithful carrier of a discrete relational structure.

\paragraph{Information theory in visualization and learning.}
Viewpoint entropy and related information measures drive view selection and
saliency~\citep{vazquez2001viewpoint,feixas2009unified}. Rate--distortion
theory~\citep{shannon1959rate,berger1971rate,cover2006elements} and variational
mutual-information bounds~\citep{barber2003im,poole2019variational,
alemi2017deep,tishby1999information} underpin our channel formulation. We use
these tools not for pixels or features but for a logical certificate carried by
geometry.

\section{Apartness structures as a source}
\label{sec:source}
Let $E=\{1,\dots,n\}$ index the sites (bodies) and let
\[
  \dom=\{(A,B): A,B\subseteq E,\ A,B\neq\varnothing,\ A\cap B=\varnothing\}
\]
be the disjoint pairs. The \emph{apartness table} of a scene is
$\Sigma=\{\sigma_{A,B}\}_{(A,B)\in\dom}$, with $\sigma_{A,B}=1$ iff the group
hulls are disjoint. As an abstract object $\Sigma$ is a
separoid~\citep{arocha2002separoids}: the relation $\sep$ defined by
$A\sep B \Leftrightarrow \sigma_{A,B}=1$ is symmetric, every separated pair is
disjoint, and it is closed under taking subsets,
\begin{equation}
  A\sep B,\ A'\subseteq A,\ B'\subseteq B \ \Longrightarrow\ A'\sep B'.
  \label{eq:downclosed}
\end{equation}
We regard $\Sigma$ as a binary \textbf{source} to be transmitted.

\paragraph{Certificate skeleton.}
Two families of pairs summarise $\Sigma$. A \emph{maximal separation} is a
separated pair $(A,B)$ with $\sigma_{A,B}=1$ that admits no single-site
extension staying separated; write $\Mset(\Sigma)$ for their set. A
\emph{minimal Radon partition} is a crossing pair $(A,B)$ with $\sigma_{A,B}=0$
such that removing any one site (keeping both parts non-empty) yields a
separation; write $\Pset(\Sigma)$. Because separations are downward closed
\eqref{eq:downclosed} and crossings are upward closed (enlarging either part
keeps the hulls intersecting), these two antichains generate the table.

\begin{proposition}[The skeleton determines the table]
\label{prop:skeleton}
For every disjoint pair $(A,B)\in\dom$,
\[
  \sigma_{A,B}=1 \iff (A,B)\sqsubseteq (M_1,M_2)\ \text{for some}\ (M_1,M_2)\in\Mset(\Sigma),
\]
where $(A,B)\sqsubseteq(M_1,M_2)$ means $A\subseteq M_1,\ B\subseteq M_2$.
Equivalently, $\sigma_{A,B}=0$ iff $(A,B)$ contains some
$(P_1,P_2)\in\Pset(\Sigma)$. Hence $\Sigma$ is recoverable from
$(\Mset,\Pset)$ alone.
\end{proposition}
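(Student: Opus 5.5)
The argument works entirely inside the finite poset $(\dom,\sqsubseteq)$ and uses only two monotonicity facts. Separations are downward closed, by \eqref{eq:downclosed}. Crossings are upward closed: if $\conv(\bigcup_A C_a)\cap\conv(\bigcup_B C_b)\neq\varnothing$ and $A\subseteq A'$, $B\subseteq B'$ with $(A',B')\in\dom$, then the hulls only grow, so they still meet. In finite posets these two facts make every separated pair lie below a maximal one and every crossing pair lie above a minimal one.

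\emph{The $\Mset$ characterisation.} For ($\Leftarrow$), if $(A,B)\sqsubseteq(M_1,M_2)$ with $\sigma_{M_1,M_2}=1$, then \eqref{eq:downclosed} gives $\sigma_{A,B}=1$ immediately. For ($\Rightarrow$), start from a separated $(A,B)$. As long as some site $e\notin A\cup B$ can be added to $A$ or to $B$ with the pair staying separated, add it. Each step strictly enlarges $A\cup B\subseteq E$, so the process stops after at most $n$ steps, at a pair $(M_1,M_2)\sqsupseteq(A,B)$.

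The one point to check is that this terminal pair really is a maximal separation in the paper's sense, which only forbids \emph{single-site} extensions. It does: the stopping rule is exactly that no single-site extension stays separated, so $(M_1,M_2)\in\Mset(\Sigma)$ by definition. This is also the main subtlety overall. Because the paper defines maximality and minimality by one-site moves, I would argue directly with the greedy procedure rather than appeal to poset-theoretic maximality under $\sqsubseteq$; the greedy argument needs no claim that local maximality implies global maximality.

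\emph{The $\Pset$ characterisation.} This is dual. For ($\Leftarrow$), if $(P_1,P_2)\sqsubseteq(A,B)$ with $\sigma_{P_1,P_2}=0$, upward closure gives $\sigma_{A,B}=0$. For ($\Rightarrow$), start from a crossing $(A,B)$ and repeatedly delete a single site, keeping both parts non-empty, whenever the result is still crossing. The size $|A|+|B|$ drops at each step, so the process terminates at a crossing pair $(P_1,P_2)\sqsubseteq(A,B)$ from which no admissible deletion stays crossing. Every admissible deletion therefore yields a separation, which is exactly the definition of membership in $\Pset(\Sigma)$.

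One degenerate case needs care. When $|P_1|=|P_2|=1$ there are no admissible deletions, so the defining condition holds vacuously and the pair counts as minimal. This is consistent with the paper's parenthetical ``keeping both parts non-empty'', and I would note it explicitly.

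Finally, the two characterisations together say that the table can be read off from either antichain alone. Hence $\Sigma$ is determined by $(\Mset,\Pset)$, indeed by each of them separately, which gives the ``Equivalently'' clause and the recoverability claim.
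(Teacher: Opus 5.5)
Your proposal is correct and follows essentially the same route as the paper. Both use downward closure for the easy direction, greedy single-site extension (terminating by finiteness) to land in $\Mset$, and the dual greedy removal with upward closure of crossings for $\Pset$. Two of your additions fill in what the paper leaves implicit: the greedy stopping rule coincides with the single-site definition, and the singleton--singleton crossing is vacuously minimal.
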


\begin{proof}
If $(A,B)\sqsubseteq(M_1,M_2)$ with $\sigma_{M_1,M_2}=1$, then
\eqref{eq:downclosed} gives $\sigma_{A,B}=1$. Conversely, if $\sigma_{A,B}=1$,
repeatedly add a free site to $A$ or to $B$ whenever the pair stays separated;
the process terminates (only finitely many sites) at a maximal separation
containing $(A,B)$. The crossing statement is the contrapositive together with
upward closure of crossings, by the same single-site argument applied to
removals.
\end{proof}

Thus the source has a sufficient statistic
$Z_\Sigma=(M_\Sigma,P_\Sigma)$ formed by the skeleton bits, and we may transmit
$Z_\Sigma$ in place of $\Sigma$. In our scenes the skeleton is far smaller than
the table (\Cref{tab:sizes}), which is what makes the certificate view
worthwhile.

\paragraph{Source entropy.} This redundancy can be quantified. The apartness
table $\Sigma$ has $|\dom|=\tfrac12(3^n-2^{n+1}+1)$ entries, but its entropy is
far lower: when the bodies are points, $\Sigma$ is a function of the
configuration's order type (oriented matroid), since the Radon partitions are
fixed by the chirotope~\citep{bjorner1999oriented}. As there are only
$2^{\Theta(n\log n)}$ order types of $n$ points in the plane~\citep{goodman1983multidimensional},
the source obeys
\begin{equation}
  H(\Sigma)\;=\;O(n\log n)\quad\text{bits},
  \label{eq:srcentropy}
\end{equation}
exponentially below the table size. The certificate skeleton is the structural
witness of this gap, and the rate--distortion and channel results below turn it
into operational statements: a faithful carrier and a good view need only convey
these $O(n\log n)$ bits, not the full table.

\section{Rendering as a channel}
\label{sec:channel}
\paragraph{Encoder.} A scene is $V=\{v_{e,j}\}$, $e\in E$, $j\in\{1,\dots,m\}$,
with $C_e(V)=\conv\{v_{e,1},\dots,v_{e,m}\}\subset\R^d$. The encoder
$f_\theta:\Sigma\mapsto V$ produces a scene realising the source; in this work
$f_\theta$ is the optimisation of \Cref{sec:objective}.

\paragraph{Renderer / channel.} A renderer $\mathcal{R}_\omega$ with parameters
$\omega$ (camera, resolution, palette, sensor) maps a scene to an image,
$Y\sim p_\omega(\,\cdot\mid V)$. We use a soft support-function rasteriser
(\Cref{sec:objective}): a point $x$ lies in $C_e$ iff
$\langle u,x\rangle\le h_e(u)$ for all directions $u$, where
$h_e(u)=\max_j\langle u,v_{e,j}\rangle$ is the support
function~\citep{schneider2014convex}; sites are alpha-composited under a fixed
palette and corrupted by Gaussian sensor noise. Occlusion, finite resolution,
colour collisions, and noise make $Y$ a genuinely lossy channel.

\paragraph{Decoder and apartness mutual information.}
A decoder $q_\phi(\hat\Sigma\mid Y)=\prod_{(A,B)}q_\phi(\hat\sigma_{A,B}\mid Y)$
estimates apartness bits from the image. We want $Y$ to carry as much of the
source as possible, i.e.\ to maximise $I(\Sigma;Y)$ (or, for any chosen
sub-structure such as the certificate $Z_\Sigma$ or the pairwise apartness graph
of \Cref{sec:experiments}, the corresponding $I(\cdot\,;Y)$). Using the standard
variational bound~\citep{barber2003im,poole2019variational},
\begin{equation}
  I(Z_\Sigma;Y)\;=\;H(Z_\Sigma)-H(Z_\Sigma\mid Y)\;\ge\;
  H(Z_\Sigma)+\E_{Z,Y}\big[\log q_\phi(Z_\Sigma\mid Y)\big],
  \label{eq:mibound}
\end{equation}
which is tight when $q_\phi$ matches the true posterior. The bound turns
maximising apartness information into minimising the decoder cross-entropy, with
the operational reading: \emph{how many bits of apartness does the rendered
image deliver?} We report the normalised
\begin{equation}
  \mathrm{IS}\;=\;\frac{\hat I(\Sigma;Y)}{H(\Sigma)}\in[0,1],
  \qquad
  \hat I(\Sigma;Y)=H(\Sigma)-\hat H(\Sigma\mid Y),
  \label{eq:infoscore}
\end{equation}
the fraction of source entropy the channel transmits, with $H(\Sigma)$ and
$\hat H(\Sigma\mid Y)$ estimated as sums of per-bit Bernoulli entropies.

\paragraph{What the pipeline can and cannot do.} The encoding, rendering and
decoding of \cref{eq:pipeline} form a Markov chain
$\Sigma\to V\to Y\to\hat\Sigma$, so the data-processing
inequality~\citep{cover2006elements} bounds every stage at once,
\begin{equation}
  I(\Sigma;\hat\Sigma)\;\le\;I(\Sigma;Y)\;\le\;I(\Sigma;V)\;\le\;H(\Sigma).
  \label{eq:dpi}
\end{equation}
No decoder recovers more apartness than the rendered image carries, and no
renderer carries more than the scene encodes. Two design goals follow: the
encoder $f_\theta$ should keep $I(\Sigma;V)$ near $H(\Sigma)$ (lose no structure
in the geometry), and the renderer should keep $I(\Sigma;Y)$ near $I(\Sigma;V)$
(lose no structure in the picture). The information score IS estimates the
middle term normalised by $H(\Sigma)$ and is therefore a property of the
encoder--channel pair, not of any particular decoder.

\section{Certificate-aware distortion}
\label{sec:distortion}
A naive distortion is the Hamming error over the table,
$d_H(\Sigma,\hat\Sigma)=\sum_{(A,B)\in\dom}\indic{\sigma_{A,B}\neq\hat\sigma_{A,B}}$.
This is misleading: by \Cref{prop:skeleton} a single wrong \emph{maximal}
separation falsifies all separations beneath it, and a wrong \emph{minimal}
Radon partition falsifies all crossings above it. Distortion should be measured
in logical consequences. Define the consequence weights
\begin{equation}
  w_M=\big|\Down(A,B)\big|=(2^{|A|}-1)(2^{|B|}-1),
  \qquad
  w_P=\big|\Up(A,B)\big|=3^{\,n-|A|-|B|},
  \label{eq:weights}
\end{equation}
counting, respectively, the non-empty sub-pairs a separation implies and the
extensions a crossing forces. The \emph{certificate (closure-aware) distortion}
is
\begin{equation}
  \Dsep(\Sigma,\hat\Sigma)=
  \sum_{(A,B)\in\Mset(\Sigma)} w_M\,\indic{\hat\sigma_{A,B}\neq 1}
  +\sum_{(A,B)\in\Pset(\Sigma)} w_P\,\indic{\hat\sigma_{A,B}\neq 0}.
  \label{eq:dsep}
\end{equation}
Its differentiable surrogate replaces each indicator by the soft separation
hinge of \Cref{sec:objective}, so that distortion is \emph{logical-consequence
loss}, not pixel error.

\section{Geometry rate}
\label{sec:rate}
The rate is the description length of the carrier $V$. Counting vertices is too
crude; we use a differentiable geometric code length with three terms:
\begin{equation}
  R(V)=\underbrace{\sum_{e,j}\log\!\Big(1+\tfrac{\|v_{e,j}\|^2}{\Delta_q^2}\Big)}_{\text{quantised coordinates}}
  \;+\;\underbrace{\sum_e H(\kappa_e)}_{\text{boundary-curvature entropy}},
  \label{eq:rate}
\end{equation}
where $\Delta_q$ is the coordinate quantisation step. The first term is the cost
of locating quantised vertices; the second is the entropy of the discrete
turning-angle (curvature) profile of each body,
$\kappa_{e,j}=\angle(v_{e,j+1}-v_{e,j})-\angle(v_{e,j}-v_{e,j-1})$, normalised to
$p_{e,j}=|\kappa_{e,j}|/\sum_i|\kappa_{e,i}|$ and
$H(\kappa_e)=-\sum_j p_{e,j}\log p_{e,j}$. Low rate favours compact, smooth
silhouettes that still encode the apartness, exactly the graphics intuition that
a clean figure should not waste boundary complexity.

\section{The objective and its differentiable channel}
\label{sec:objective}
\paragraph{Directional separation channel.}
For unit direction $u$, the group support is $h_A(u)=\max_{a\in A}h_a(u)$ and the
signed margin
\begin{equation}
  s_{A,B}(u)=-\,h_A(u)-h_B(-u)
  \label{eq:margin}
\end{equation}
is positive iff $u$ separates the groups with that margin (a separating
hyperplane normal). Sampling $L$ directions $\{u_\ell\}$, a smooth max margin and
a direction distribution are
\begin{equation}
  \Delta_{A,B}=\tau\log\!\sum_{\ell=1}^{L}\exp\!\Big(\tfrac{s_{A,B}(u_\ell)}{\tau}\Big),
  \qquad
  p_{A,B}(\ell)=\softmax_\ell\!\Big(\tfrac{s_{A,B}(u_\ell)}{\tau}\Big).
  \label{eq:softsep}
\end{equation}
As $\tau\to 0$, $\Delta_{A,B}\to\max_u s_{A,B}(u)$, which is positive exactly when
the groups are separable. The \emph{apartness of a pair is thus encoded as a
distribution $p_{A,B}$ over separating directions}, with entropy
\begin{equation}
  H_{\mathrm{dir}}(A,B)=-\sum_{\ell=1}^{L}p_{A,B}(\ell)\log p_{A,B}(\ell)
  \label{eq:hdir}
\end{equation}
measuring how broadly the separation is witnessed. The set
$\{u:s_{A,B}(u)>0\}$ is the \emph{witnessing cone} of the separation; its
breadth is what governs recoverability across viewpoints (\Cref{sec:viewpoint}),
and we widen it directly with the robustness term below.

\paragraph{Separation and certificate losses.} With a target margin $\gamma$,
\begin{equation}
  \mathcal{L}_{\mathrm{sep}}=
  \!\!\sum_{(A,B):\sigma=1}\!\!\big[\max(0,\gamma-\Delta_{A,B})\big]^2
  +\!\!\sum_{(A,B):\sigma=0}\!\!\big[\max(0,\Delta_{A,B})\big]^2,
  \label{eq:lsep}
\end{equation}
and the certificate-weighted version applies the weights \eqref{eq:weights} over
$\Mset\cup\Pset$, the differentiable surrogate of $\Dsep$.

\paragraph{Radon witnesses.} A crossing $(A,B)\in\Pset$ can be certified
constructively by a witness point in both hulls: barycentric weights
$\alpha,\beta\ge0$ with $\sum\alpha=\sum\beta=1$ and
$\sum_{a,i}\alpha^P_{a,i}v_{a,i}=\sum_{b,j}\beta^P_{b,j}v_{b,j}$. Penalising the
squared mismatch yields $\mathcal{L}_{\mathrm{radon}}$, and the witness entropy
$H(\alpha),H(\beta)$ selects between a \emph{sparse} certificate (few vertices)
and \emph{distributed visual evidence} (a crossing spread over a legible region
rather than a single pixel of contact).

\paragraph{Viewpoint robustness.} To make a separation readable from many
directions we widen its witnessing cone, rewarding the soft fraction of
directions that already separate the pair,
\begin{equation}
  R_{\mathrm{view}}(V)=\!\!\sum_{(A,B):\sigma=1}\!\!
  \Big(1-\tfrac{1}{L}\textstyle\sum_{\ell}\varsigma\!\big(s_{A,B}(u_\ell)/\tau_C\big)\Big),
  \label{eq:rview}
\end{equation}
with $\varsigma$ the logistic. Minimising $R_{\mathrm{view}}$ spreads the
apartness over more separating directions (the broad, high-entropy regime of
the directional distribution $p_{A,B}$) at a small geometric-rate cost.

\paragraph{Graphics regularity.} A shape prior $G(V)$ favours compact,
near-regular bodies by penalising the squared coefficient of variation of vertex
radii about each centroid, keeping silhouettes legible.

\paragraph{Full objective.} Putting the pieces together, with
$V=f_\theta(\Sigma)$ and $Y\sim p_\omega(\cdot\mid V)$,
\begin{equation}
  \boxed{\;
  \mathcal{J}=
  \underbrace{w_{\mathrm{sep}}\mathcal{L}_{\mathrm{sep}}
  +w_{\mathrm{cert}}\mathcal{L}_{\mathrm{cert}}}_{\text{distortion}}
  \;+\;\lambda_R\,R(V)\;+\;\lambda_G\,G(V)\;+\;\lambda_C\,R_{\mathrm{view}}(V) ,
  \;}
  \label{eq:objective}
\end{equation}
optionally augmented by $-\lambda_I\,\E_{Z,Y}\log q_\phi(Z\mid Y)$ to push the
rendered channel toward maximal apartness information via \eqref{eq:mibound}. The
Lagrangian form $\mathcal{J}_\beta=\E[\Dsep]+\beta\,\E[R(V)]$ traces the
operational rate--distortion curve as $\beta$ (here $\lambda_R$) varies, while
$\lambda_C$ trades a little rate for viewpoint robustness.

\section{Information-optimal viewpoints}
\label{sec:viewpoint}
A single orthographic view along direction $u$ observes each body's
\emph{shadow}, the support interval $[\,{-h_e(-u)},\,h_e(u)\,]$. A separation
$(A,B)$ is \emph{certified} by view $u$ iff the group shadows are disjoint,
i.e.\ iff $s_{A,B}(u)>0$; a view never reports a false separation, so it recovers
exactly the separations whose witnessing cone contains $u$. The
\emph{viewpoint informativeness} is the fraction of separations a view certifies,
\begin{equation}
  \mathcal{I}(\omega)=I(Z_\Sigma;Y_\omega),\qquad
  \widehat{\mathcal{I}}(\omega)=\E_{Z,Y_\omega}\log q_\phi(Z_\Sigma\mid Y_\omega),
  \label{eq:viewinfo}
\end{equation}
and the best view is $\omega^\star=\argmax_\omega\widehat{\mathcal{I}}(\omega)$.
For several views, the information-optimal set trades coverage against
redundancy,
\begin{equation}
  \Omega_k^\star=\argmax_{|\Omega_k|=k} I\big(Z_\Sigma;Y_{\omega_1},\dots,Y_{\omega_k}\big)
  -\eta\sum_{i<j}I\big(Y_{\omega_i};Y_{\omega_j}\big),
  \label{eq:multiview}
\end{equation}
which we approximate greedily by maximal coverage of as-yet-uncertified
separations. Finally, for a rendering family
$\mathcal{F}=\{f_\theta,\mathcal{R}_\omega\}$ under a rate budget $R_0$, the
\emph{apartness channel capacity}
\begin{equation}
  C_{\mathcal{F}}(R_0)=\sup_{\substack{p(\Sigma),\,f,\,\mathcal{R}\\ \E R(V)\le R_0}}
  I(Z_\Sigma;Y)
  \label{eq:capacity}
\end{equation}
asks how many bits of separation information a renderable convex scene family can
carry, the natural endpoint of the channel view.

\section{Experiments}
\label{sec:experiments}
\paragraph{Setup.} We study planar scenes ($d=2$) with $n\in\{5,6,7\}$ convex
polygons of $m=8$ vertices. Ground-truth tables are computed exactly by
linear-programming hull-intersection tests; the certificate skeleton is
extracted by single-site extension/removal tests (\Cref{prop:skeleton}). The
reconstruction encoder optimises \eqref{eq:objective} with Adam~\citep{kingma2015adam}
over $L{=}96$ directions, annealing $\tau$ from $0.25$ to $0.05$, with multiple
restarts kept by the realisability loss only (never peeking at the held-out
score). The renderer outputs RGB images with Gaussian noise. For the channel
experiment the decoding target is the \emph{pairwise apartness graph} (the
$\binom{n}{2}$ singleton--singleton bits $\sigma_{\{i\},\{j\}}$, the directly
visible $1$-skeleton of the structure), and the decoder is object-centric: it
colour-keys the image into one soft response map per body, reduces each to
spatial moments (visible mass, centroid, spread), and a shared pairwise MLP
reads apartness from these summaries, so resolution and noise act precisely by
degrading the moment estimates. It is trained with binary cross-entropy, i.e.\
maximising the bound \eqref{eq:mibound}. Everything runs on a single laptop.

\begin{table}[t]
  \centering
  \caption{Source vs.\ certificate size, and reconstruction accuracy
  (mean over scenes). The skeleton is far smaller than the table yet,
  optimised alone, recovers it almost perfectly.}
  \label{tab:sizes}
  \small
  \begin{tabular}{lcccccc}
    \toprule
    & \multicolumn{2}{c}{size} & \multicolumn{2}{c}{full-table loss}
      & \multicolumn{2}{c}{skeleton-only loss}\\
    \cmidrule(lr){2-3}\cmidrule(lr){4-5}\cmidrule(lr){6-7}
    $n$ & $|\dom|$ & $|\Mset|{+}|\Pset|$
        & table acc. & cert.\ acc.
        & table acc. & cert.\ acc.\\
    \midrule
    5 & \tabDEfive  & \tabSKfive  & \tabFTtableFive  & \tabFTcertFive  & \tabSKtableFive  & \tabSKcertFive\\
    6 & \tabDEsix   & \tabSKsix   & \tabFTtableSix   & \tabFTcertSix   & \tabSKtableSix   & \tabSKcertSix\\
    7 & \tabDEseven & \tabSKseven & \tabFTtableSeven & \tabFTcertSeven & \tabSKtableSeven & \tabSKcertSeven\\
    \bottomrule
  \end{tabular}
\end{table}

\paragraph{(A) Recovery from the table, and the sufficiency of the skeleton.}
\Cref{fig:reconstruction} and \Cref{tab:sizes} confirm \Cref{prop:skeleton}
operationally: optimising \emph{only} the certificate skeleton recovers the full
apartness table to within a fraction of a percent of optimising the entire
table, despite the skeleton being many times smaller. Certificate weighting
concentrates whatever residual error remains away from high-consequence bits, so
certificate accuracy stays high even when a few low-consequence bits slip.

\begin{figure}[t]
  \centering
  \includegraphics[width=\linewidth]{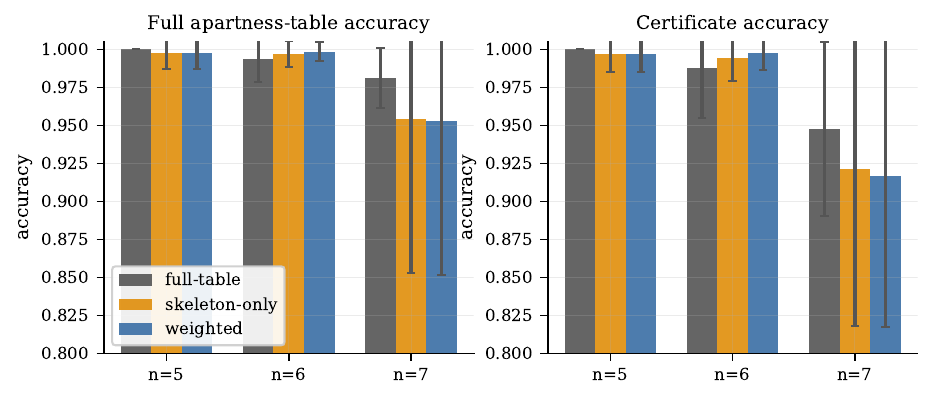}
  \caption{Reconstruction accuracy by scene size for three objective variants.
  Optimising the small certificate skeleton (orange) matches optimising the full
  table (grey) on both the full-table metric (left) and the certificate metric
  (right); the weighted objective (blue) protects high-consequence bits.}
  \label{fig:reconstruction}
\end{figure}

\paragraph{(B) Operational rate--distortion.}
We reconstruct a faithful scene from each table, then quantise its vertices to a
grid of step $\Delta$: coarser grids cost fewer bits but flip more apartness
bits. \Cref{fig:rd}a is the resulting frontier: a clean, monotone trade-off of
geometry rate (in bits) against certificate distortion $\Dsep$, with a wide
high-rate plateau (quantisation is harmless until $\Delta$ approaches the
separation margins) followed by a steep collapse. \Cref{fig:rd}b contrasts the
two fidelity criteria on the same scenes: the consequence-weighted certificate
distortion sits well above plain Hamming error and rises faster as the rate
drops, because coarse quantisation preferentially flips the high-consequence
certificate bits. This is the central quantitative claim: \emph{apartness-preserving
rendering behaves like a source code, and the fidelity criterion that matters is
the structured one.}

\begin{figure}[t]
  \centering
  \includegraphics[width=\linewidth]{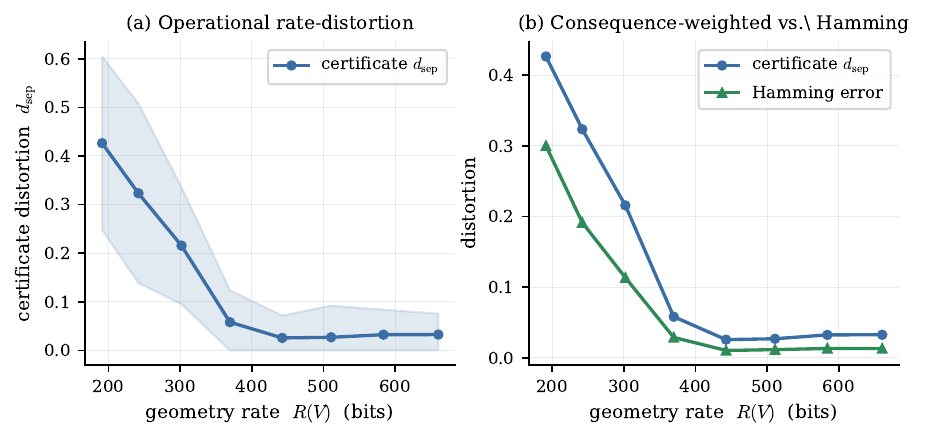}
  \caption{(a) Operational rate--distortion frontier from coordinate
  quantisation: geometry rate (bits) vs.\ certificate distortion $\Dsep$ (shaded
  band: $\pm1$ std over scenes). (b) The consequence-weighted certificate
  distortion exceeds plain Hamming error and degrades faster at low rate.}
  \label{fig:rd}
\end{figure}

\paragraph{(C) The rendered channel and its information score.}
Training the object-centric decoder on rendered images and evaluating the bound
\eqref{eq:infoscore} measures how much of the pairwise apartness graph the
picture conveys. \Cref{fig:channel} shows the information score rising with image
resolution (more visual rate) and collapsing with sensor noise, exactly the
behaviour of a capacity-limited channel \eqref{eq:capacity}. At $48\times48$ with
mild noise the channel delivers $\mathrm{IS}=\channelISfortyeight$ of the
apartness-graph entropy at \channelAccfortyeight{}\% per-bit accuracy; sample
renders appear in \Cref{fig:samples}. This descent is dictated by the channel,
not by our network. Fano's inequality~\citep{fano1961transmission,cover2006elements}
forces any reader of the image to a per-bit error of at least
$p_e\ge H_b^{-1}\big((1-\mathrm{IS})\,H(\sigma)\big)$ on a balanced apartness bit,
where $H_b^{-1}$ is the inverse binary entropy on $[0,\tfrac12]$: the falling
noise curve of \Cref{fig:channel}b is therefore a hard error floor that no
decoder can beat, and the accuracy we observe sits just above it, so the decoder
operates near the channel limit.

\begin{figure}[t]
  \centering
  \includegraphics[width=\linewidth]{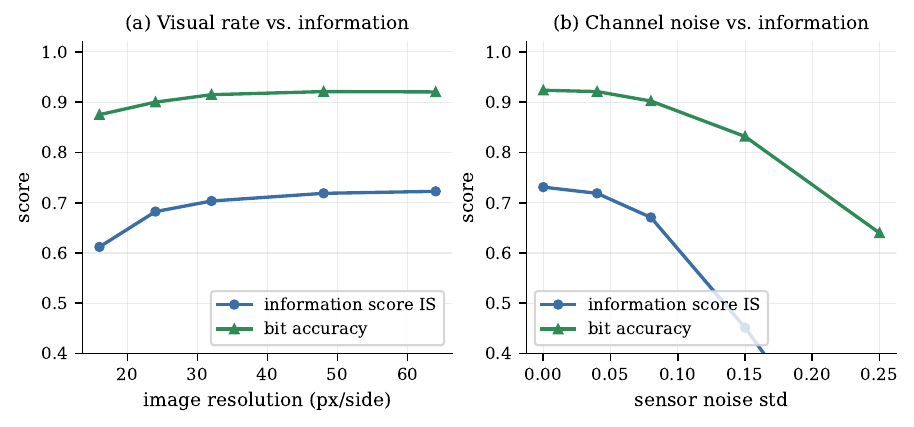}
  \caption{The rendered visual channel. (a) Information score and bit accuracy
  rise with image resolution (visual rate) and saturate. (b) Both fall as sensor
  noise grows (the information score collapses toward zero), the signature of a
  capacity-limited channel.}
  \label{fig:channel}
\end{figure}

\begin{figure}[t]
  \centering
  \includegraphics[width=\linewidth]{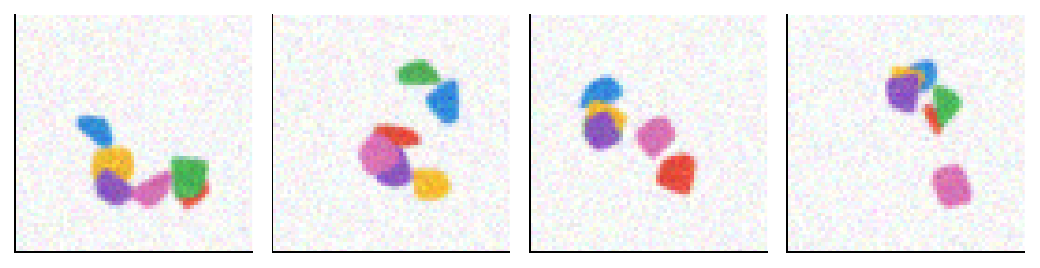}
  \caption{Rendered convex apartness scenes: the channel output $Y$ a decoder
  must read. Apartness (clusters vs.\ isolated bodies) is visible but corrupted
  by occlusion, finite resolution, colour collisions, and noise.}
  \label{fig:samples}
\end{figure}

\paragraph{(D) Information-optimal viewpoints.}
\Cref{fig:viewpoint}(a) plots viewpoint informativeness $I(\theta)$: certain
directions certify many more separations than others. A single best view already
certifies \viewBestSingle{} of all separations on average; greedy
information-optimal view selection \eqref{eq:multiview} reaches full coverage in
far fewer views than random selection (\Cref{fig:viewpoint}b), and the
histogram of witnessing-cone widths (\Cref{fig:viewpoint}c) explains why: most
separations are robust across a broad cone of viewpoints, while a minority are
fragile. Greedy is not merely expedient here. The map from a view set $\Omega$ to
the separations it certifies,
$\mathrm{cov}(\Omega)=\{(A,B):\exists\,u\in\Omega,\ s_{A,B}(u)>0\}$, is a coverage
function, hence monotone and submodular, so the greedy selector of
\eqref{eq:multiview} certifies at least a $(1-1/e)$ fraction of the optimum
reachable by \emph{any} $k$ views~\citep{nemhauser1978analysis}, precisely the
gap over random selection that \Cref{fig:viewpoint}b traces. Choosing views for
apartness is thus an instance of submodular maximisation, the same structure
that governs sensor placement and view planning~\citep{krause2008near}.

\begin{figure}[t]
  \centering
  \includegraphics[width=\linewidth]{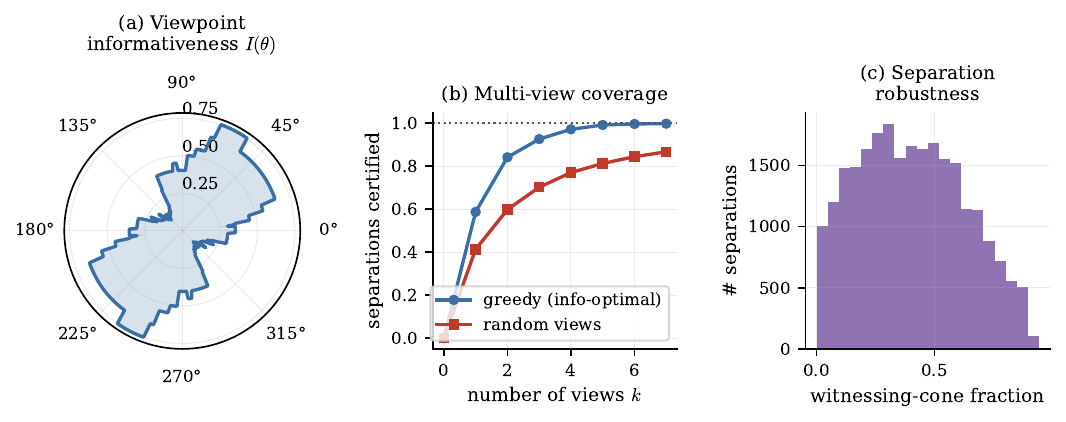}
  \caption{Viewpoint information. (a) Polar informativeness $I(\theta)$.
  (b) Greedy info-optimal views certify all separations far faster than random
  views. (c) Distribution of witnessing-cone widths (per-separation robustness).}
  \label{fig:viewpoint}
\end{figure}

\paragraph{(E) Trading rate for viewpoint robustness.}
The encoder can deliberately make separations easier to see. Increasing the
robustness weight $\lambda_C$ in \eqref{eq:objective} widens each separation's
witnessing cone monotonically (\Cref{fig:entropy}a), spreading the apartness over
more separating directions. \Cref{fig:entropy}b shows the price: a modest
increase in geometry rate $R(V)$, while table accuracy is essentially preserved.
This is a clean rate--robustness knob: a concrete bridge from the
directional-information term to a graphics outcome (figures readable from more
viewpoints).

\begin{figure}[t]
  \centering
  \includegraphics[width=\linewidth]{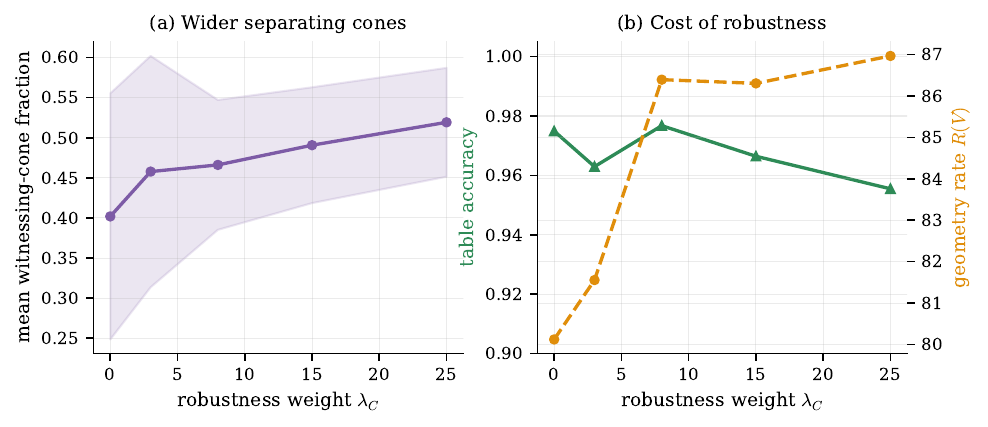}
  \caption{Trading rate for viewpoint robustness. (a) Raising the robustness
  weight $\lambda_C$ widens separating cones (mean cone fraction and best
  single-view coverage both rise). (b) The cost is a modest rate increase, with
  table accuracy essentially unchanged.}
  \label{fig:entropy}
\end{figure}

\paragraph{Apartness as arcs on the view circle.} The effect in
\Cref{fig:entropy} has a clean geometric reading. Each separation's witnessing
cone $\{u:s_{A,B}(u)>0\}$ is an arc of the circle of viewing directions, so
certifying every separation is a covering problem on $S^1$: the chosen views must
stab all arcs. If every arc has angular measure at least $\gamma\cdot 2\pi$, then
$\lceil 1/\gamma\rceil$ equispaced views already form such a cover, and the
robustness term \eqref{eq:rview} acts by enlarging the arcs (raising $\gamma$),
which lowers this bound; sharper guarantees follow from $\varepsilon$-net theory
for geometric range spaces~\citep{haussler1987epsilon}. Reading apartness as a
measure on arcs of $S^1$ places view planning for separation structures within
integral geometry~\citep{santalo2004integral}, a connection we expect to repay
further study.

\section{Discussion and limitations}
Our scenes are planar and modest in size; computing exact tables and skeletons
is exponential in $n$, so the source itself is the scaling bottleneck, not the
optimisation. Reconstruction is a non-convex problem and occasionally lands in a
local optimum at $n=7$, which restarts mitigate but do not eliminate. The
renderer is deliberately simple; richer cameras, true 3D bodies, and learned
renderers fit the same channel formulation and are natural extensions. The
capacity \eqref{eq:capacity} is defined but only bounded empirically here.
Finally, our distortion privileges the certificate skeleton; other consequence
weightings (e.g.\ task-specific) slot into the same framework.

\section{Conclusion}
We recast rendering a convex scene as transmitting its apartness structure over
a visual channel, and posed apartness-preserving rendering as a rate--distortion
problem with a closure-aware fidelity criterion. Maximal separations and minimal
Radon partitions are the certificates; a differentiable support-function scene is
the low-rate carrier; and the rendered image is the channel whose apartness
mutual information we lower-bound and measure. Experiments confirm recovery from
the table alone, a clean rate--distortion frontier on which the
consequence-weighted distortion is the binding criterion, a measurable
information score for rendered scenes, and an information-theoretic account of
viewpoint selection and robustness. We hope the
certificate-as-payload view encourages rendering objectives that are honest about
\emph{what a picture is supposed to make recoverable.}

\paragraph{Reproducibility.} The complete implementation, experiment drivers,
and result files accompany this paper as ancillary material; running
\texttt{anc/experiments/run\_all.py} regenerates every figure and table from
scratch with fixed seeds.

\small
\bibliographystyle{plainnat}
\bibliography{references}

\end{document}

%% file: results_macros.tex
\newcommand{\tabDEfive}{90}
\newcommand{\tabSKfive}{9.2}
\newcommand{\tabFTtableFive}{100.0}
\newcommand{\tabFTcertFive}{100.0}
\newcommand{\tabSKtableFive}{99.8}
\newcommand{\tabSKcertFive}{99.7}
\newcommand{\tabDEsix}{301}
\newcommand{\tabSKsix}{14.8}
\newcommand{\tabFTtableSix}{99.4}
\newcommand{\tabFTcertSix}{98.8}
\newcommand{\tabSKtableSix}{99.7}
\newcommand{\tabSKcertSix}{99.5}
\newcommand{\tabDEseven}{966}
\newcommand{\tabSKseven}{28.0}
\newcommand{\tabFTtableSeven}{98.1}
\newcommand{\tabFTcertSeven}{94.8}
\newcommand{\tabSKtableSeven}{95.4}
\newcommand{\tabSKcertSeven}{92.1}
\newcommand{\reconHamSix}{99.9}
\newcommand{\channelISfortyeight}{0.72}
\newcommand{\channelAccfortyeight}{92.1}
\newcommand{\viewBestSingle}{58.7\%}
\newcommand{\teaserAcc}{97.3}

%% file: references.bib
@article{radon1921,
  author  = {Radon, Johann},
  title   = {Mengen konvexer K\"orper, die einen gemeinsamen Punkt enthalten},
  journal = {Mathematische Annalen},
  volume  = {83},
  number  = {1--2},
  pages   = {113--115},
  year    = {1921},
  doi     = {10.1007/BF01464231}
}

@article{arocha2002separoids,
  author  = {Arocha, Jorge Luis and Bracho, Javier and Montejano, Luis and
             Oliveros, Deborah and Strausz, Ricardo},
  title   = {Separoids, their categories and a {H}adwiger-type theorem for
             transversals},
  journal = {Discrete \& Computational Geometry},
  volume  = {27},
  number  = {3},
  pages   = {377--385},
  year    = {2002},
  doi     = {10.1007/s00454-001-0075-2}
}

@phdthesis{strausz2004separoids,
  author = {Strausz, Ricardo},
  title  = {Separoides},
  school = {Universidad Nacional Aut\'onoma de M\'exico},
  year   = {2004}
}

@book{bjorner1999oriented,
  author    = {Bj\"orner, Anders and Las Vergnas, Michel and Sturmfels, Bernd
               and White, Neil and Ziegler, G\"unter M.},
  title     = {Oriented Matroids},
  edition   = {2},
  publisher = {Cambridge University Press},
  series    = {Encyclopedia of Mathematics and its Applications},
  volume    = {46},
  year      = {1999}
}

@article{tverberg1966,
  author  = {Tverberg, Helge},
  title   = {A generalization of {R}adon's theorem},
  journal = {Journal of the London Mathematical Society},
  volume  = {s1-41},
  number  = {1},
  pages   = {123--128},
  year    = {1966}
}

@book{schneider2014convex,
  author    = {Schneider, Rolf},
  title     = {Convex Bodies: The {B}runn--{M}inkowski Theory},
  edition   = {2},
  publisher = {Cambridge University Press},
  year      = {2014}
}

@book{fano1961transmission,
  author    = {Fano, Robert M.},
  title     = {Transmission of Information: A Statistical Theory of
               Communications},
  publisher = {MIT Press},
  year      = {1961}
}

@book{cover2006elements,
  author    = {Cover, Thomas M. and Thomas, Joy A.},
  title     = {Elements of Information Theory},
  edition   = {2},
  publisher = {Wiley-Interscience},
  year      = {2006}
}

@incollection{shannon1959rate,
  author    = {Shannon, Claude E.},
  title     = {Coding theorems for a discrete source with a fidelity criterion},
  booktitle = {IRE National Convention Record, Part 4},
  pages     = {142--163},
  year      = {1959}
}

@book{berger1971rate,
  author    = {Berger, Toby},
  title     = {Rate Distortion Theory: A Mathematical Basis for Data Compression},
  publisher = {Prentice-Hall},
  year      = {1971}
}

@inproceedings{barber2003im,
  author    = {Barber, David and Agakov, Felix},
  title     = {The {IM} algorithm: a variational approach to information
               maximization},
  booktitle = {Advances in Neural Information Processing Systems (NeurIPS)},
  year      = {2003}
}

@inproceedings{poole2019variational,
  author    = {Poole, Ben and Ozair, Sherjil and van den Oord, Aaron and
               Alemi, Alexander A. and Tucker, George},
  title     = {On variational bounds of mutual information},
  booktitle = {International Conference on Machine Learning (ICML)},
  year      = {2019}
}

@inproceedings{alemi2017deep,
  author    = {Alemi, Alexander A. and Fischer, Ian and Dillon, Joshua V. and
               Murphy, Kevin},
  title     = {Deep variational information bottleneck},
  booktitle = {International Conference on Learning Representations (ICLR)},
  year      = {2017}
}

@inproceedings{tishby1999information,
  author    = {Tishby, Naftali and Pereira, Fernando C. and Bialek, William},
  title     = {The information bottleneck method},
  booktitle = {37th Annual Allerton Conference on Communication, Control, and
               Computing},
  pages     = {368--377},
  year      = {1999}
}

@inproceedings{loper2014opendr,
  author    = {Loper, Matthew M. and Black, Michael J.},
  title     = {{OpenDR}: An approximate differentiable renderer},
  booktitle = {European Conference on Computer Vision (ECCV)},
  pages     = {154--169},
  year      = {2014}
}

@inproceedings{kato2018neural,
  author    = {Kato, Hiroharu and Ushiku, Yoshitaka and Harada, Tatsuya},
  title     = {Neural 3{D} mesh renderer},
  booktitle = {IEEE Conference on Computer Vision and Pattern Recognition (CVPR)},
  pages     = {3907--3916},
  year      = {2018}
}

@inproceedings{liu2019softras,
  author    = {Liu, Shichen and Li, Tianye and Chen, Weikai and Li, Hao},
  title     = {Soft Rasterizer: A differentiable renderer for image-based 3{D}
               reasoning},
  booktitle = {IEEE International Conference on Computer Vision (ICCV)},
  pages     = {7708--7717},
  year      = {2019}
}

@article{laine2020modular,
  author  = {Laine, Samuli and Hellsten, Janne and Karras, Tero and Seol,
             Yeongho and Lehtinen, Jaakko and Aila, Timo},
  title   = {Modular primitives for high-performance differentiable rendering},
  journal = {ACM Transactions on Graphics (TOG)},
  volume  = {39},
  number  = {6},
  pages   = {1--14},
  year    = {2020}
}

@article{kato2020differentiable,
  author  = {Kato, Hiroharu and Beker, Deniz and Morariu, Mihai and Ando,
             Takahiro and Matsuoka, Toru and Kehl, Wadim and Gaidon, Adrien},
  title   = {Differentiable rendering: A survey},
  journal = {arXiv preprint arXiv:2006.12057},
  year    = {2020}
}

@article{nemhauser1978analysis,
  author  = {Nemhauser, George L. and Wolsey, Laurence A. and Fisher,
             Marshall L.},
  title   = {An analysis of approximations for maximizing submodular set
             functions---{I}},
  journal = {Mathematical Programming},
  volume  = {14},
  number  = {1},
  pages   = {265--294},
  year    = {1978}
}

@article{krause2008near,
  author  = {Krause, Andreas and Singh, Ajit and Guestrin, Carlos},
  title   = {Near-optimal sensor placements in {G}aussian processes: Theory,
             efficient algorithms and empirical studies},
  journal = {Journal of Machine Learning Research},
  volume  = {9},
  pages   = {235--284},
  year    = {2008}
}

@article{haussler1987epsilon,
  author  = {Haussler, David and Welzl, Emo},
  title   = {$\varepsilon$-nets and simplex range queries},
  journal = {Discrete \& Computational Geometry},
  volume  = {2},
  number  = {2},
  pages   = {127--151},
  year    = {1987}
}

@book{santalo2004integral,
  author    = {Santal\'o, Luis A.},
  title     = {Integral Geometry and Geometric Probability},
  edition   = {2},
  publisher = {Cambridge University Press},
  year      = {2004}
}

@inproceedings{vazquez2001viewpoint,
  author    = {V\'azquez, Pere-Pau and Feixas, Miquel and Sbert, Mateu and
               Heidrich, Wolfgang},
  title     = {Viewpoint selection using viewpoint entropy},
  booktitle = {Vision, Modeling, and Visualization (VMV)},
  pages     = {273--280},
  year      = {2001}
}

@article{feixas2009unified,
  author  = {Feixas, Miquel and Sbert, Mateu and Gonz\'alez, Francisco},
  title   = {A unified information-theoretic framework for viewpoint selection
             and mesh saliency},
  journal = {ACM Transactions on Applied Perception},
  volume  = {6},
  number  = {1},
  pages   = {1--23},
  year    = {2009}
}

@article{mamou2009vhacd,
  author  = {Mamou, Khaled and Ghorbel, Faouzi},
  title   = {A simple and efficient approach for {3D} mesh approximate convex
             decomposition},
  journal = {IEEE International Conference on Image Processing (ICIP)},
  pages   = {3501--3504},
  year    = {2009}
}

@article{lien2007approximate,
  author  = {Lien, Jyh-Ming and Amato, Nancy M.},
  title   = {Approximate convex decomposition of polygons},
  journal = {Computational Geometry},
  volume  = {35},
  number  = {1--2},
  pages   = {100--123},
  year    = {2007}
}

@inproceedings{kingma2015adam,
  author    = {Kingma, Diederik P. and Ba, Jimmy},
  title     = {{Adam}: A method for stochastic optimization},
  booktitle = {International Conference on Learning Representations (ICLR)},
  year      = {2015}
}

@article{goodman1983multidimensional,
  author  = {Goodman, Jacob E. and Pollack, Richard},
  title   = {Multidimensional sorting},
  journal = {SIAM Journal on Computing},
  volume  = {12},
  number  = {3},
  pages   = {484--507},
  year    = {1983}
}
